\newtheorem{theorem}{Theorem}
\newtheorem{algorithm}[theorem]{Algorithm}
\newtheorem{corollary}[theorem]{Corollary}
\newtheorem{proposition}[theorem]{Proposition}
\newtheorem{remark}[theorem]{Remark}
\newcommand{\colvec}[2][.8]{%
  \scalebox{#1}{%
    \renewcommand{\arraystretch}{.8}%
    $\begin{bmatrix}#2\end{bmatrix}$%
  }
}
\title{On the Computation of Worst Attacks: a LP Framework}
\author{Nabil H. Hirzallah and Petros G. Voulgaris}
\begin{document}

\maketitle
\thispagestyle{empty}
\pagestyle{empty}

\begin{abstract}
We consider the problem of false data injection attacks modeled as additive disturbances in various parts of a general LTI feedback system and derive necessary and sufficient conditions for the existence of stealthy unbounded attacks. We also consider the problem of characterizing the worst, bounded and stealthy attacks.  This problem involves a maximization of a convex function subject to convex constraints, and hence, in principle, it is not easy to solve.  However, by employing a $\ell_\infty$ framework, we show how tractable Linear Programming (LP) methods can be used to obtain the worst attack design.  Moreover, we provide a controller synthesis iterative method to minimize the worst impact of such attacks.

\end{abstract}

\section{Introduction}
Advancements in communication, sensing and computing technologies allowed the control of physical systems or plants to be implemented over networks (cyber space), leading to the creation of ``cyber-physical" systems. Such systems are found in many applications including the smart grid and vehicle control units. However, the interaction between cyber systems and physical systems introduced security challenges that can be exploited by malicious agents. It has been shown through real world incidents and research paper simulations that stealthy attacks can be carefully designed to cause significant damage in control systems. Therefore, it is very important to research the security vulnerabilities of cyber-physical systems, and find solutions that guarantee the stability and resiliency of control systems under different attack scenarios. \\
Recent work on security of cyber-physical systems from a control-theoretic perspective has been focused on the characterization of feasible attacks and proposing ways for detection and/or improving the resiliency of the control system subject to such attacks. The type of attacks studied can be generally split into two categories: static attacks (attacks that do not take into account the dynamics of the system and/or do not affect the states of the system directly) or dynamic attacks. Attacks under each category can be classified as stealthy or not stealthy depending on the assumptions and the detection methods used. Examples of static attacks include attacks on the power system state estimators \cite{sandberg1}, where a carefully designed bias can be added to the sensor measurements without being detected by the commonly used statistical detection methods. Another work on static attacks is by \cite{tabuada} and \cite{hespanha} where they showed that the states of the system cannot be accurately reconstructed if half of the sensors are attacked. Both papers propose computationally intensive methods to reconstruct the states when less than half of the sensors are attacked. Their work was extended by \cite{fraz} where the authors provide a framework to to reconstruct the states that is robust to additive and multiplicative errors. On the other hand, research work related to dynamic attacks include \cite{dual} where the authors provide necessary and sufficient conditions for the existence of unbounded stealthy actuator and/or sensor attacks. In addition they proposed dual rate control to detect unbounded stealthy actuator attacks (zero dynamics attacks). In \cite{sinopoli} the authors inject a random signal (unknown to the attacker) into the system to detect replay attacks at the expense of increasing the cost of the LQG controller. In \cite{sandberg2} the case for finding the worst bias constant (steady state) attack has been considered and a tractable procedure to compute it has been developed where the energy of the detection signal was considered as a measure of stealthiness. In \cite{insok} coordinated actuator and sensor attacks are computed that create unbounded expectation of the estimation error while keeping the residual of the KF detector bounded. In \cite{lqg} optimal attacks are computed on a LQG systems that minimize the K-L divergence between the true and falsified state estimates such that the attack impact is above a specified a limit, showing that the optimal attacks are additive white noise. In \cite{wu} optimal actuator attacks are designed using the minimum principle that maximizes a quadratic cost related to the error between the healthy (un-attacked) system and the attacked system while minimizing the attack cost, without including any stealthiness requirement.\\
In this work, we consider signal attacks where the general problem from the attacker's perspective is to find the attack input $d =
\{d(k)\}$ so that it is stealthy while inflicting the maximum damage on the performance variable $z=\{z(k)\}$. We showed in our previous work \cite{dual} that unbounded attacks for LTI systems are related to the unstable zeros and/or poles of the open loop system. However, in this paper we consider the problem of characterizing the worst, bounded and stealthy attacks. This problem involves a maximization of a convex function subject to convex constraints. We propose different attack resource constraints to make the problem more practical. More specifically, we assume that the attacker has a finite time window $\{0,1,\dots,t_a\}$ to attack the system and inflict the maximum damage before the attack is over, and we attempt to solve the following three attack scenarios: 
Scenario 1 : Attacker can attack in a finite time window up to $t=t_a$, his goal is to inflict the maximum damage anywhere (before or after $t_a$) while remaining stealthy for all $t$.\\
Scenario 2 : Attacker can attack in a finite interval up to $t=t_a$, his goal is to inflict the maximum damage anywhere (before or after $t_a$) while remaining stealthy for $t\leq t_a$ (does not care if detected after the attack is over).\\
Scenario 3 : Attacker can attack in a finite interval up to $t=t_a$, his goal is to inflict the maximum damage at $t\leq t_a$ while remaining stealthy for $t\leq t_a$.\\
We show that by employing a $\ell_\infty$ framework, tractable Linear Programming (LP) methods can be used to compute the worst attack for the above three scenarios. Our work is closely related to \cite{sandberg2}, \cite{insok} and \cite{lqg}. However, we don't assume a constant $d$ such as in \cite{sandberg2} where they assume the system is in steady state. In addition, the work in \cite{insok} and \cite{lqg} relate to either a specific detection method (e.g. residual detectors) or to a specific controller in use. We plan to investigate these problems in a more general input-output fashion that does not depend on the particular controller used.\\
In the second part of this paper, we build on the worst attack design problem and provide a $K$-$d$ controller synthesis iterative method to minimize the performance cost without increasing the impact of the worst attack. Each iteration is a LP and alternates between finding the worst attack $d$ for a given controller $K$, and finding the next $K$ that minimizes the performance cost while keeping a non-increasing upper bound on the worst case impact inflicted by $d$.   

Some standard notation we use is as follows: $\mathbb{Z}_{+}$, $\mathbb{R}^{n}$, $\mathbb{C}^{n}$ and $\mathbb{R}^{n\times m}$ denote the sets of non-negative integers, $n$-dimensional
real vectors, $n$-dimensional
complex vectors and $n\times m$ dimensional real matrices, respectively. For any $\mathbb{R}^{n}$ or $\mathbb{C}^{n}$ vector $x$ we denote $x'$ its transpose and $|x|:=\max_i\sqrt{x_i^2}$ where  $x'=\left[ x_{1},x_{2},...,x_{n}\right]$; for a sequence of real $n$-dimensional
vectors, $x=\{x(k)\}_{k\in\mathbb{Z}_{+}} $ we denote $||x||_\infty:=\sup_k|x(k)|$; for a sequence of real $n\times m$ dimensional real matrices $G=\{G_k\}_{k\in\mathbb{Z}_{+}} $ we denote its $z$-transform $G(z):=\sum_{k=0}^{\infty}G_k z^{-k};$ and if viewed as the pulse response of the LTI system $G$ then $||G||_1=\sup_{||x||_\infty\le1}||Gx||_\infty$.  Finally, we will call system $G$ ``tall" when $y=Gu$ and the dimension of the output vector $y$ is at least equal to the  dimension of the input $u$; otherwise we will call $G$ ``fat".  We will also be using the standard notions for zeros and coprime factorizations of a LTI system $G$ (e.g., \cite{antsaklisBook,ZhouDoyleGloverBook,dahlehb}.)


\section{Problem Setup}


We consider the case of a general signal attack $d$ on a closed loop system of Figure \ref{general}.  Let $\Phi(K)$ describe the effect of $d$ on the performance variable $z$ and on the monitoring signal $\psi$, i.e. let  $\Phi=\begin{bmatrix} \Phi_{zd}\\
\Phi_{\psi d}\end{bmatrix}=:d\mapsto\begin{bmatrix} z\\
\psi
\end{bmatrix} $.  The monitoring signal $\psi$ consists of the measured output $y$ and the control signal $u$; it can however contain any other information that is recorded and measured, e.g., reference inputs.  In this setup, we assume that there may be other external disturbances and noise inputs which are ``normal", i.e., not malicious attackers, which are not shown in the figure.  Also, all the formulation deals with  discrete-time systems and signals.

The attacker's goal can be stated in general as
\begin{equation}
\!\begin{aligned}
&\max\limits_{d} \left\|z\right\|_\infty\\
&\text{s.t.}~ \left\|\psi\right\|_\infty\leq\theta,\\
\end{aligned}
\label{eqn:attack}
\end{equation}
where $\theta$ is an alarm threshold, associated with the afore mentioned normal set of disturbances.  In our previous work \cite{dual}, we established exact conditions for stealthiness of unbounded actuator and sensor attacks which can totally destroy the system.  These attacks are ultimately related to the open loop plant $P$, and  for LTI systems in particular, to the non-minimum phase (unstable) zeros and unstable poles of $P$.  We note, as pointed in \cite{dual}, that unstable zeros can also be due to the sampled data implementation of controllers.


\begin{figure}[t]
\centering
\includegraphics
[width=2in]
{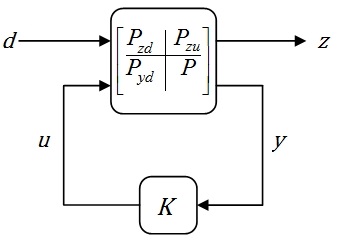}
\caption{General setup of input-output maps.}
\label{general}
\end{figure}

In this general setup of Figure \ref{general}, we elaborate on the existence of stealthy unbounded attacks using an input-output approach.  In particular, considering a left coprime factorization (\cite{antsaklisBook,ZhouDoyleGloverBook,dahlehb}) for the part of the generalized system that connects inputs to the measured output $y=P_{yd}d+Pu$ in the open loop, we have
$$[P_{yd}~P]=\tilde{M}^{-1}[\tilde{N}_{yd}~\tilde{N}]$$

Using a left coprime factorization for the stabilizing controller $K=YX^{-1}$  we can express
\begin{equation*}
\psi=\left[
\begin{array}{c}
y \\
u%
\end{array}%
\right]=\left[
\begin{array}{c}
X \\
Y%
\end{array}%
\right]W^{-1}\tilde{M}P_{yd}d=\left[
\begin{array}{c}
X \\
Y%
\end{array}%
\right]W^{-1}\tilde{N}_{yd}d.  
\end{equation*}
where  $W=\tilde{M}X-\tilde{N}Y$.  Since $W$ is stable and, by stability of the closed loop, has a stable inverse $W^{-1}$ we have that the detectability of $d$ depends on the unstable zeros of $\tilde{N}_{yd}$: unbounded stealthy attacks $d$ are possible if and only if $\tilde{N}_{yd}$  has unstable zeros.

For actuator only attacks $$P_{yd}=P=\tilde{M}^{-1}\tilde{N}\Longrightarrow\tilde{N}_{yd}=\tilde{N}$$
while for sensor only attacks $$P_{yd}=I=\tilde{M}^{-1}\tilde{M}\Longrightarrow\tilde{N}_{yd}=\tilde{M}.$$

Hence, this shows how the unstable zeros of $P$ (which are the unstable zeros of $\tilde{N}$) and the unstable poles of $P$ (which are the unstable zeros of $\tilde{M}$)  relate to the actuator and sensor attacks considered in \cite{dual}.  Multirate sampling can potentially remove unstable zeros of $\tilde{N}_{yd}$ as it was shown in \cite{dual} for unbounded actuator attacks, but it cannot work for total sensor unbounded attacks.

In the following we consider the case of bounded in magnitude (and time) attacks with various levels of stealth.  The question we want to address is how to compute the worst possible bounded attacks and how to defend against such attacks by a suitable controller design.

\section{Computation of Worst Attack}

We consider the problem of computing the worst
case attack in \eqref{eqn:attack} when the attacker has a finite time window $\{0,1, \dots ,t_a\}$ to
attack the system. In addition, we can possibly require the attack to remain stealthy after the attack is over. This allows for repeatedly attacking the system without
triggering monitoring signal alarm.

%
%
%

Specifically, consider the optimization problem in \eqref{eqn:attack}. Assume the LTI closed loop system $\Phi(K)$ is stable and let $t_{\psi d}$ and $t_{z d}$ be design parameters related to the decay rate of the pulse responses of of $\Phi_{z d}$ and $\Phi_{\psi d}$ respectively.  These parameters determine the time windows that the attacker cares for impact and stealthiness respectively. Let $t_d=\max(t_{z d},t_{\psi d}$). Suppose the intruder can only attack the system during a finite interval $\{0,1, \dots ,t_a\}$, with attack magnitude less than or equal to $\alpha$.  Further, assume that the time window of interest in solving \eqref{eqn:attack} is $t_a +t_d$, i.e., the $\ell_\infty $ norms are applied for $\{z(k)\}$ and $\{d(k)\}$ with $k=\{0,\dots,t_a+t_d\}$.

The system of equations governing the output $z$ when subjected to the attack input $d$ for each instance of time are given by

\begin{equation}
\!\begin{aligned}
\colvec[.65]{ z(0)\\
z(1)\\
\vdots\\
z(t_a)\\
z(t_a+1)\\
\vdots\\
z(t_a+t_{zd})\\
}=
\colvec[.65]{	
\Phi_{z d}(0)						&	0														&	0							 									 &	\cdots\\
\Phi_{z d}(1)						&	\Phi_{z d}(0)								&	0						   									&	\cdots\\
\vdots									&	\vdots											&	\vdots						 							&	 \vdots\\
\Phi_{z d}(t_a)					&	\Phi_{z d}(t_a-1)						&		\Phi_{z d}(t_a-2)		 					&	\cdots\\
\Phi_{z d}(t_a+1)				&	\Phi_{z d}(t_a)							&\Phi_{z d}(t_a-1)							  &	\cdots\\
\vdots									&	\vdots											&	\vdots												  &	\vdots\\
\Phi_{z d}(t_a+t_{zd})	&	\Phi_{z d}(t_a+t_{zd}-1) 		&	\Phi_{z d}(t_a+t_{zd}-2) 				&	\cdots\\
}
\colvec[.65]{ d(0)\\
d(1)\\
\vdots\\
d(t_a)\\
0\\
\vdots\\
0\\
}
\end{aligned}
\label{eqn:proof1}
\end{equation}
where $d(k)=0$ for $t> t_a$ since the attack time window belongs to $t \in \{0, \dots ,t_a\}$. The objective is to find the sequence $\{d(k)\}$, $k=\{0,\dots ,t_a\}$ that maximizes $\left\|z\right\|_{\infty}$ such that $\left\|\psi \right\|_{\infty}\leq \theta$. This corresponds to selecting the optimal row in \eqref{eqn:proof1} to be maximized and finding the optimal $d$ that would maximize this row.  In view of the above, the following proposition is obvious.

\begin{proposition}
\label{prop:thfinitelp1}
Problem \eqref{eqn:attack} can be formulated as the following optimization problem for finite attack window $t \in \{0, \dots ,t_a\}$:
\begin{equation}
\!\begin{aligned}
&\max\limits_{d,n\in \{0,1,\dots,t_a + t_{zd}\}} \sum\limits_{k=0}^{n}  \Phi_{zd}(n-k)d(k) \\
&\text{s.t.}~\Big|\sum\limits_{k=0}^{\tau}  \Phi_{\psi d}(\tau -k)d(k)\Big|\leq\theta,~  \tau=0, 1, \dots, t_a + t_{\psi d},\\
&~~~~\left\vert d(k)\right\vert\leq\alpha,~ k=0, 1, \dots, t_a,\\
&~~~~~d(k)=0,~~k=t_a+1,\dots,t_a+t_d.\\
\end{aligned}
\label{eqn:ta1}
\end{equation}
After finding the worst case attack $\hat{d}$, the worst case impact can be obtained by computing $\left\|\Phi_{z d}\hat{d}\right\|_{\infty}$.
\end{proposition}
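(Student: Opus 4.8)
The plan is to turn the infinite-dimensional program \eqref{eqn:attack} into a finite one by exploiting the finite attack support together with the decay of the pulse responses. First I would note that since $d(k)=0$ for $k>t_a$, the convolutions $z=\Phi_{zd}d$ and $\psi=\Phi_{\psi d}d$ evaluated at time $n$ reduce to the finite sums $\sum_{k=0}^{n}\Phi_{zd}(n-k)d(k)$ and $\sum_{k=0}^{n}\Phi_{\psi d}(n-k)d(k)$; stacking these over $n=0,1,\dots$ is exactly the block-Toeplitz system \eqref{eqn:proof1}. Because $\Phi(K)$ is stable its pulse responses decay, so choosing $t_{zd}$ and $t_{\psi d}$ according to this decay makes the contributions beyond $t_a+t_{zd}$ (resp. $t_a+t_{\psi d}$) negligible; hence $\|z\|_\infty=\max_{0\le n\le t_a+t_{zd}}|z(n)|$ and $\|\psi\|_\infty\le\theta$ becomes $|\psi(\tau)|\le\theta$ for $\tau=0,1,\dots,t_a+t_{\psi d}$. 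This is the only place the design parameters $t_{zd},t_{\psi d}$ (and thus $t_d$) enter, and it is the one step that is exact only when the pulse responses have finite support; otherwise it holds up to the prescribed truncation accuracy.

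Next I would remove the absolute value from the objective. Writing $|x|=\max_i|x_i|$, we have $\|z\|_\infty=\max_{n}\max_i\big|\big(\sum_{k=0}^{n}\Phi_{zd}(n-k)d(k)\big)_i\big|$, i.e. a maximum over the scalar rows of the matrix in \eqref{eqn:proof1} (this is the ``selecting the optimal row'' of the preceding discussion). The feasible set $\{\,|d(k)|\le\alpha,\ d(k)=0\ \text{for }k>t_a,\ |\psi(\tau)|\le\theta\,\}$ is symmetric under $d\mapsto-d$, and $z$ is odd under this map, so maximizing $|(\cdot)_i|$ over feasible $d$ coincides with maximizing the signed quantity $(\cdot)_i$ (take the sign of $d$ that makes it nonnegative). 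Therefore the absolute value can be dropped and the objective becomes $\max_{d,n}\sum_{k=0}^{n}\Phi_{zd}(n-k)d(k)$, with the implicit additional maximization over the scalar row index, which is precisely the objective of \eqref{eqn:ta1}.

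Finally, for each fixed $n$ the resulting problem is a linear program in the variables $d(0),\dots,d(t_a)$: the objective is linear in $d$, each magnitude bound $|d(k)|\le\alpha$ is a pair of linear inequalities, each monitoring constraint $\big|\sum_{k=0}^{\tau}\Phi_{\psi d}(\tau-k)d(k)\big|\le\theta$ expands componentwise into linear inequalities, and $d(k)=0$ for $k>t_a$ are linear equalities. The overall problem is the maximum of these LPs over the finitely many indices $n\in\{0,\dots,t_a+t_{zd}\}$, hence itself an LP (in practice one solves each and keeps the best). Once the optimizer $\hat d$ is obtained, the worst-case impact is by definition $\|z\|_\infty=\|\Phi_{zd}\hat d\|_\infty$, which gives the last claim. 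The only genuinely non-mechanical point is the horizon-truncation argument of the first paragraph; the Toeplitz bookkeeping, the sign reduction, and the reduction to finitely many LPs are routine, which is why the proposition is stated as ``obvious.''
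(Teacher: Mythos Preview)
Your proposal is correct and matches the paper's approach: the paper does not give a formal proof of this proposition, declaring it ``obvious'' in view of the preceding Toeplitz expansion \eqref{eqn:proof1} and relegating the horizon-truncation and stealthiness-horizon arguments to the remarks that follow. Your write-up simply spells out those same steps (finite-support convolution, decay-based truncation, the row-selection/sign-symmetry reduction to a signed linear objective, and the LP structure for fixed $n$), which is exactly the content the paper leaves implicit.
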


\begin{remark} 
The objective function looks for the optimal row in the set $\{0, \dots ,t_a+t_{zd}\}$. We can always choose a sufficiently long $t_{zd}$, determined by the decay rate of $\Phi_{zd}$ and the bound $\alpha$ on $d$, to ensure that we capture the worst case $\left\|z\right\|$. 
\end{remark}


\begin{remark} Note that the first set of constraints ensures the monitoring signal $\psi$ is below a threshold level ($\left\|\psi \right\|_{\infty}\leq \theta$) during and after the attack interval.
Since we assume that $\Phi_{\psi d}$ is stable and that $d(k)=0$ for $t > t_a$, if $t_{\psi d}$ is chosen long enough, depending on  the decay rate of  $\Phi_{\psi d}$ and the bound $\alpha$, one can guarantee that $d$ is undetectable for all $t$. Therefore, to guarantee stealthiness for all $t$ it is sufficient to enforce the monitoring constraints up to $t_a+t_{\psi d}$. The last set of constraints ensures the attack is bounded and decays to zero at the end of the attack interval.  
\end{remark}

\begin{remark}
Problem \eqref{eqn:ta1} is LP for a fixed $n$ which can be solved efficiently.  Fixing $n$ transforms the objective function to a linear function under linear (polytopic) constraints.  However, one has to solve (in principle) $t_a +t_{zd}$ LPs.  
  \end{remark}

 We now provide a simple search algorithm to solve Problem \ref{eqn:ta1}:

\begin{algorithm}
\caption{Compute worst attack $\hat{d}$}
\begin{algorithmic}
\State Input $ \Phi_{zd}$, $\Phi_{\psi d}$, $t_a$, $t_{zd}$, $t_{\psi d}$, $t_{d}$, $\theta$ and $\alpha$.
\For {$i=1:t_a+t_{zd}$}
\State Solve
\begin{equation*}
\!\begin{aligned}
&\max\limits_{d_i} \sum\limits_{k=0}^{i}  \Phi_{zd}(i-k)d_i(k) \\
&\text{s.t.}~\Big|\sum\limits_{k=0}^{\tau}  \Phi_{\psi d}(\tau -k)d_i(k)\Big|\leq\theta,~  \tau=0, 1, \dots, t_a + t_{\psi d},\\
&~~~~\left\vert d_i(k)\right\vert\leq\alpha,~ k=0, 1, \dots, t_a,\\
&~~~~~d_i(k)=0,~~k=t_a+1,\dots,t_a+t_d.\\
\end{aligned}
\end{equation*}
\State Compute and store $\left\|\Phi_{z d}d_i\right\|_{\infty}$
\EndFor
\State \textbf{end}
\State Compare $\left\|\Phi_{z d}d_i\right\|_{\infty}$ and determine the worst attack $\hat{d}$.

\end{algorithmic}
\end{algorithm}

In the sequel, we consider certain cases which simplify further the computations.  Specifically, we consider the problem of computing the worst
case attack when the attacker has a finite time window $k=\{0,\dots,t_a\}$ to
attack the system such as in Proposition \ref{prop:thfinitelp1}. However, in this case we assume that the intruder does not mind being detected after the attack is over. The following proposition describes how to construct the optimal $d$.
 
\begin{proposition}
\label{prop:thfinitelp2}
Consider the optimization Problem in \eqref{eqn:attack} with $t_{\psi d}=0$.  Then, its solution can be obtained by solving
\begin{equation}
\!\begin{aligned}
&\max\limits_{d,n\in\{t_a,\dots,t_a+t_{z d}\}} \sum\limits_{k=0}^{n}  \Phi_{zd}(n-k)d(k) \\
&\text{s.t.}~\Big|\sum\limits_{k=0}^{\tau}  \Phi_{\psi d}(\tau -k)d(k)\Big|\leq\theta,~  \tau=0, 1, \dots, t_a ,\\
&~~~~\left\vert d(k)\right\vert\leq\alpha,~ k=0, 1, \dots, t_a,\\
&~~~~~d(k)=0,~~k=t_a+1,\dots,t_a+t_{z d}.\\
\end{aligned}
\label{eqn:ta2}
\end{equation}
\end{proposition}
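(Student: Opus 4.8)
The plan is to recognize \eqref{eqn:ta2} as nothing other than Problem \eqref{eqn:ta1} specialized to $t_{\psi d}=0$ (so that $t_d=\max(t_{zd},0)=t_{zd}$), with the single modification that the maximization index $n$ is restricted from $\{0,1,\dots,t_a+t_{zd}\}$ down to $\{t_a,\dots,t_a+t_{zd}\}$. By Proposition \ref{prop:thfinitelp1}, Problem \eqref{eqn:ta1} with $t_{\psi d}=0$ already coincides with Problem \eqref{eqn:attack} in the present setting, so it suffices to prove that shrinking the index set to $\{t_a,\dots,t_a+t_{zd}\}$ leaves the optimal value unchanged. One inequality is immediate, since maximizing over a subset cannot increase the value: the optimum of \eqref{eqn:ta2} is at most that of \eqref{eqn:ta1}. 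The content is therefore the reverse inequality.

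For the reverse inequality I would start from an optimal pair $(d,n)$ for \eqref{eqn:ta1} with $t_{\psi d}=0$; if $n\ge t_a$ there is nothing to do, so assume $n<t_a$ and consider the right-shifted attack $\bar d$ defined by $\bar d(k)=d(k-(t_a-n))$ for $k\in\{t_a-n,\dots,t_a\}$ and $\bar d(k)=0$ otherwise. Because the shift amount $t_a-n$ is nonnegative and $n\le t_a$, the support of $\bar d$ lies in $\{0,\dots,t_a\}$, so $\bar d$ respects the attack window and the magnitude bound $|\bar d(k)|\le\alpha$. Shift-invariance of the LTI convolutions then yields $\sum_{k=0}^{t_a}\Phi_{zd}(t_a-k)\bar d(k)=\sum_{j=0}^{n}\Phi_{zd}(n-j)d(j)$, i.e. the impact of $\bar d$ at row $t_a$ equals the (optimal) impact of $d$ at row $n$; similarly $\sum_{k=0}^{\tau}\Phi_{\psi d}(\tau-k)\bar d(k)=\sum_{k=0}^{\tau-(t_a-n)}\Phi_{\psi d}\big((\tau-(t_a-n))-k\big)d(k)$ for $\tau\ge t_a-n$, and $0$ for smaller $\tau$. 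As $\tau$ ranges over $\{0,\dots,t_a\}$ the shifted argument $\tau-(t_a-n)$ never exceeds $n\le t_a$, so whenever it is nonnegative it lies in $\{0,\dots,t_a\}$, a window on which feasibility of the original $d$ bounds the monitoring signal by $\theta$; hence $\bar d$ satisfies all the stealthiness constraints of \eqref{eqn:ta2}. Thus $(\bar d,t_a)$ is feasible for \eqref{eqn:ta2} with objective value equal to the optimum of \eqref{eqn:ta1}, which proves equality of the two optimal values and hence that any optimizer of \eqref{eqn:ta2} is also an optimizer of \eqref{eqn:ta1}.

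Finally, since \eqref{eqn:ta1} with $t_{\psi d}=0$ is by Proposition \ref{prop:thfinitelp1} a reformulation of \eqref{eqn:attack}, and an optimizer of \eqref{eqn:ta2} is feasible for it and attains its value, any solution $\hat d$ of \eqref{eqn:ta2} is a worst attack whose impact is recovered as $\|\Phi_{zd}\hat d\|_\infty$ exactly as before. The delicate step — and the only place where $t_{\psi d}=0$ is genuinely used — is the feasibility check for the shifted attack: it goes through precisely because the monitoring constraints need hold only on $\{0,\dots,t_a\}$, which is exactly the window into which the time-shifted monitoring signal of $\bar d$ maps back to times where $d$ was already certified stealthy. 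Had stealthiness been demanded past $t_a$ (Scenario~1), the shifted signal $\Phi_{\psi d}\bar d$ would require control at instants for which no a priori bound on $\Phi_{\psi d}d$ is available, and the reduction would fail — which is why that scenario admits no analogous simplification. One should also keep in mind the usual sign-symmetry of the feasible set, which is what permits writing the objective without an absolute value.
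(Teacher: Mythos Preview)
Your proposal is correct and follows essentially the same approach as the paper: both argue by right-shifting an optimal attack whose maximizing row index $n$ lies below $t_a$, invoking the shift-invariance of the LTI maps and the fact that with $t_{\psi d}=0$ the monitoring constraints on the shifted signal reduce to constraints already certified for the original $d$ at earlier times. Your write-up is in fact more careful than the paper's, since you explicitly truncate the shifted attack to $\{t_a-n,\dots,t_a\}$ (so that its support stays within the attack window) and verify that this truncation affects neither the objective at row $t_a$ nor feasibility; the paper's proof leaves this implicit.
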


\begin{proof}
We will prove that the optimal row to be maximized is in the set $\{z(t_a), \dots, z(t_a+t_{zd})\}$.
Let $\hat{d}$ be the worst attack that maximizes $\mu=:\left\|z\right\|_{\infty}$ found by solving for the maximum impact over all the rows of \eqref{eqn:proof1} where the stealthiness constraints are enforced up to $t=t_a$. Assume that $\hat{d}$ was found by maximizing any row before $z(t_a)$ calling it row $i$. Since the stealthiness constraints are imposed only up to $t_a$ and $\Phi(K)$ is LTI, we can delay $\hat{d}$ by $t_a-i$ steps (shift $\hat{d}$ to the right) so that $\left\|z\right\|_{\infty}=\mu$ is achieved by maximizing the row $z(t_a)$ without violating the stealthiness constraints. In addition, we cannot shift the attack beyond $z(t_a)$ since $\hat{d}(k)=0$ for $t > t_a$. As a result, maximizing $\left\|z\right\|_{\infty}$ for $k=\{0,1,\dots,t_a\}$ is equivalent to maximizing $\left\|z\right\|_{\infty}$ for $k=\{t_a\}$. 

\end{proof}

\begin{remark}
\label{rem:rem2}
The optimization problem in \eqref{eqn:ta2} differs from the problem in \eqref{eqn:ta1} in two ways: First, the stealthiness constraints set in \eqref{eqn:ta2} is a subset of the set in \eqref{eqn:ta1}, since in \eqref{eqn:ta2} the objective is to remain stealthy only during the attack interval, where in \eqref{eqn:ta1} the stealthiness condition is enforced at all times. Therefore, the attack designed using Proposition \ref{prop:thfinitelp2} yields worse impact in the $\ell_{\infty}$ sense than the attacks designed using Proposition \ref{prop:thfinitelp1}.  The second difference is in the objective function where in \eqref{eqn:ta1} we have to maximize each row in \eqref{eqn:proof1} to find the worst attack (i.e., $t_a+t_{zd}$ LPs), while in \eqref{eqn:ta2} we only need to maximize the last rows associated with $[z(t_a), \dots, z(t_a+t_{zd})]'$ (i.e., $t_{zd}$ LPs).  An immediate corollary is as follows.
\end{remark}

\begin{corollary}
\label{thfinitelp3}
  Let $t_{zd}=t_{\psi d}=0$, i.e., the attacker cares to inflict maximum damage in the window up to $t_a$ while does not care for stealthiness after $t_a$.  Then, the optimal $d$ is obtained by solving the following LP
\begin{equation}
\!\begin{aligned}
&\max\limits_{d} \sum\limits_{k=0}^{t_a}  \Phi_{zd}(t_a-k)d(k) \\
&\text{s.t.}~\Big|\sum\limits_{k=0}^{\tau}  \Phi_{\psi d}(\tau -k)d(k)\Big|\leq\theta,~  \tau=0, 1, \dots, t_a ,\\
&~~~~\left\vert d(k)\right\vert\leq\alpha,~ k=0, 1, \dots, t_a,\\
\end{aligned}
\label{eqn:ta3}
\end{equation}
\end{corollary}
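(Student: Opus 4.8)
The plan is to obtain Corollary~\ref{thfinitelp3} as a pure specialization of Proposition~\ref{prop:thfinitelp2}: no new structural argument about worst attacks is required, since the substantive step — the time-shift argument placing the optimal row at $z(t_a)$ once stealthiness is demanded only on $\{0,\dots,t_a\}$ — is already carried out in the proof of that proposition. The corollary's content is just that the two reductions there degenerate cleanly when $t_{zd}=0$.

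First I would invoke Proposition~\ref{prop:thfinitelp2}, whose hypothesis $t_{\psi d}=0$ is implied by the present hypothesis $t_{zd}=t_{\psi d}=0$; this identifies Problem~\eqref{eqn:attack} over the attack window $k\in\{0,\dots,t_a\}$ with \eqref{eqn:ta2}. Next I would substitute $t_{zd}=0$ into \eqref{eqn:ta2} and read off three simplifications: (i) the range of the maximizing row index collapses, $\{t_a,\dots,t_a+t_{zd}\}=\{t_a\}$, so $n$ is no longer a decision variable and the objective becomes the single linear functional $\sum_{k=0}^{t_a}\Phi_{zd}(t_a-k)d(k)$; (ii) the terminal zero-padding constraints $d(k)=0$, $k=t_a+1,\dots,t_a+t_{zd}$, index an empty set and drop out; (iii) the stealthiness constraints, already imposed only up to $t_a+t_{\psi d}=t_a$ in \eqref{eqn:ta2}, are left unchanged, as are the box constraints $|d(k)|\le\alpha$. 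What remains is precisely \eqref{eqn:ta3}. Finally I would note, as already observed for \eqref{eqn:ta1} with $n$ fixed, that this is a genuine LP — a linear objective over the polytope cut out by $\bigl|\sum_{k=0}^{\tau}\Phi_{\psi d}(\tau-k)d(k)\bigr|\le\theta$ and $|d(k)|\le\alpha$ — with the usual sign/component enumeration folded into the objective exactly as in Propositions~\ref{prop:thfinitelp1}--\ref{prop:thfinitelp2}, so a single solve yields $\hat d$ and the outer search over rows disappears.

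Honestly, the main ``obstacle'' here is only bookkeeping: one must check that the vacuous-constraint reduction in (ii) is legitimate and that setting $t_{zd}=0$ is consistent with the standing convention $t_d=\max(t_{zd},t_{\psi d})$, so that $t_d=0$, the problem horizon $t_a+t_d$ contracts to $t_a$, and the array \eqref{eqn:proof1} reduces to its first block of $t_a+1$ rows. No inequality or optimality argument beyond what Proposition~\ref{prop:thfinitelp2} already supplies is needed.
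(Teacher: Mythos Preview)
Your proposal is correct and matches the paper's approach: the corollary is stated as an immediate consequence of Proposition~\ref{prop:thfinitelp2}, and your derivation---specializing \eqref{eqn:ta2} by setting $t_{zd}=0$ so the row index collapses to $n=t_a$ and the zero-padding constraints become vacuous---is exactly that specialization spelled out.
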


\begin{remark}
If $\Phi_{\psi d}$ is non-minimum phase and $\alpha$ is not specified, then the optimization problem in Proposition \ref{prop:thfinitelp2} and Corollary \ref{thfinitelp3} will yield unbounded zero dynamics attacks \cite{dual}.

\end{remark}

\section{Controller Design for Resiliency - $K$-$d$ iteration }

In view of the previous discussion, a controller design procedure can be formulated based on LP. In particular, given a desired $\ell_1$ performance level $\gamma$ for attacks $d$,  find $K$ such that $\left\|\Phi_{zd}(K)\right\|_1\le\gamma$, and to ensure that for a given attack level  characterized by $\left\| d\right\|_\infty\le\alpha$, where $\alpha$ is an attack resource parameter, the ``undetected loss'' of the closed loop given by
$$\mu_{\alpha} :=\max_{d}\left\|\Phi_{z d}(K)d\right\|_\infty$$ $$\text{s.t.}~ \left\| \Phi_{\psi d}(K)d\right\|_\infty\leq\theta,~\left\| d\right\|_\infty\le\alpha$$
remains below a desired level $\mu$.  Computing $\mu_{\alpha}$ for a given $K$ corresponds to the problem of computing the worst $d$ of the previous section. A synthesis procedure can be developed by a ``K-d'' type of iteration:
	\begin{itemize}
\item Given $K_i$ with $\left\|\Phi_{zd}(K_i)\right\|=\gamma_i$ find $d_i$ from:
$$\mu_i:=\max_{d}\left\|\Phi_{zd}(K_i)d\right\|_\infty~$$ $$\text{s.t.}~ \left\| \Phi_{\psi d}(K_i)d\right\|_\infty\leq\theta,~\left\| d\right\|_\infty\le\alpha.$$
\item Given $d_i$ find $K_{i+1}$ from:
$$\gamma_{i+1}:=\min_{K}\left\|\Phi_{zd}(K)\right\|_1~\text{s.t.}~ \left\| \Phi_{zd}(K)d_i\right\|_\infty\le\mu_i$$
 \item At each iteration $i$ the problem is a LP with
 $$\gamma_{i}\le \gamma_{i-1}\le \gamma_0,~\mu_i\le\gamma_i\left\|d_i\right\|_\infty,~\left\|d_i\right\|_\infty\le\alpha.$$
 	\end{itemize}
The above formulation guarantees that the upper bound on the attack impact (i.e. $\mu_i$) is non-increasing with each iteration.

\section{Conclusions}
We considered the problem of computing worst case bounded stealthy false data injection attacks for LTI systems. We considered different attack resource constraints and stealthiness intervals. This problem involves a maximization of a convex function subject to convex constraints, and it was shown that it can be cast as a series of LP problems under $\ell_\infty$ framework. We provided a search algorithm to solve the set of LPs. Furthermore, we provided an iterative controller synthesis procedure that alternates between computing worst attacks and designing controllers that enhance performance and minimize the impact of worst attacks.

\end{document}